\newcolumntype{L}[1]{>{\raggedright\arraybackslash}p{#1}}
\newcolumntype{C}[1]{>{\centering\arraybackslash}p{#1}}
\newcolumntype{R}[1]{>{\raggedleft\arraybackslash}p{#1}}
\newcommand{\multiline}[1]{%
	\begin{tabularx}{\dimexpr\linewidth-\ALG@thistlm}[t]{@{}X@{}}
		#1
	\end{tabularx}
}
\newtheorem{Lem}{Lemma}
\newlength{\figwidth}
\begin{document}
	
	\setlength{\pdfpagewidth}{8.5in}
	\setlength{\pdfpageheight}{11in}
\title{\LARGE Relay Assisted Underlay Cognitive Radio Networks with Multiple Users}

\author{Lanwei Zhang,~\IEEEmembership{Student Member,~IEEE,}
        Rajitha Senanayake,~\IEEEmembership{Member,~IEEE,}
        \\Saman Atapattu,~\IEEEmembership{Senior Member,~IEEE,}
        and~Jamie Evans,~\IEEEmembership{Senior Member,~IEEE.}
	\vspace{0mm} }
\maketitle

\begin{abstract}
In this letter,  we consider an underlay cognitive radio network assisted by dual-hop decode-and-forward (DF) relaying. For a general multi-user network, we adopt a \textit{max-min fairness} relay selection scheme and analyse the outage probability when the channels are subject to independent and non-identical Nakagami-$m$ fading. The relay network  operates  within  the  constraint  imposed  on  the peak interference power tolerable by the primary receiver. We then analyse the asymptotic outage probability performance and illustrate the existence of i) the full-diversity order when the interference level at the primary user increases proportionally with the relay transmit power; and ii) an outage floor when the transmit powers of the relays are restricted by the primary receiver. We also analyse the outage probability with imperfect channel state information (CSI) and the average throughput over Rayleigh fading channels. 
Illustrative analytical results are accurately validated by numerical simulations.
\end{abstract}
\begin{IEEEkeywords}
Cognitive radio, decode-and-forward, Max-Min fairness, outage probability, relay selection, throughput.
\end{IEEEkeywords}

\vspace{-0mm}
\section{Introduction}\label{introduction}
Cognitive radio is proposed as a promising technique to mitigate the inefficient utilization of the radio frequency spectrum~\cite{Ali17}.
Underlay cognitive radios allow unlicensed secondary users to access the licensed frequency spectrum, under the condition that they do not cause an intolerable level of interference to the licensed users \cite{Haykin05}. 
Relaying has emerged as a technique  to improve power efficiencies, coverage and reduce outage probability. It has been extensively studied in the context of cognitive radio networks, e.g., \cite{Hyadi13,Park16,Boddapati18,Sun20,Duong12,Hyadi132,Lei2017tccn}.
\vspace{-0mm}
\subsection{Related Work}
Several research efforts explored the performance in terms of outage probability, bit error rate, secrecy outage and/or throughput  of relay assisted cognitive radio networks \cite{Hyadi13,Park16,Boddapati18,Lei2017tccn}. New relay selection strategies have also been proposed  under maximum transmit power and maximum interference constraints imposed by the primary network \cite{Sun20}. All these works consider a single secondary-user pair, and the performance has been analyzed over Rayleigh fading channels. Urban areas, however, are candidates for network densification due to the high concentration of mobile users. As a result, the development of a relay assisted cognitive radio network to support multiple secondary-user communications within a given coherence time is of paramount importance, especially in 4G, 5G and beyond wireless applications in disaster areas, battlefield or tactical operations. Few papers have considered  the more practical Nakagami fading channel model \cite{Duong12,Hyadi132,Lei2017tccn} but only for simple single secondary-user networks. 

\vspace{-0mm}
\subsection{Our Contribution}
Since previous papers have not considered a {\it joint relay selection} for {\it multiple secondary-user networks} over {\it Nakagami} fading, this paper focuses on a multi-user cognitive radio network where multiple secondary users transmit information to multiple secondary destinations via  decode-and-forward (DF) relays. Due to the power constraints imposed by the primary network, the transmit powers of the relays are restricted to the peak interference power tolerable by the primary receiver. For such a network, we make the following novel contributions: i) We first adopt a {\it max-min fairness} RS scheme which was proposed for a classical multi-user multi-relay dual-hop network in \cite{Atapattu13}.  This optimal RS (ORS) scheme maintains user fairness while maximizing the received signal-to-noise ratio (SNR); ii) We then analytically characterize the outage probability for the ORS scheme when the channels are subject to  Nakagami-$m$ fading and distance dependent path-loss; iii) Asymptotic analysis shows that the ORS scheme can achieve a full-diversity order under certain SNR conditions, or else there exists  an outage floor due to the power constraints of the primary network; and iv) We extend the analysis to consider imperfect channel state information (CSI) over  Rayleigh fading, as a special case. v) We analyse the average throughput over Rayleigh fading channels. The above results are also compared with those for other RS schemes, e.g. Naive RS scheme and Random RS scheme. Moreover, the considered  dual-hop network and the corresponding analysis can  be extended to a multi-hop network by adopting a decentralized RS algorithm proposed in \cite{Senanayake18,Sun20}. 

\vspace{-0mm}
\section{System Model}\label{systemmodel}\label{SM}
\vspace{-0mm}
\subsection{Network Model}
We consider an underlay cognitive radio network where $M$ secondary transmitters ($S_1, \ldots S_{M}$)  communicate with $M$ dedicated secondary receivers ($D_1, \ldots D_{M}$), in the presence of a primary receiver ($PU$). The communication is assisted by a dual-hop DF relay network with $N$ relay nodes ($R_1, \ldots R_{N}$). We have $M$ secondary node pairs denoted as user~$u$, $u\in\{1,2, \ldots M\}$, where user~$u$ corresponds to the $S_u-D_u$ pair.
We assume that there are no direct links between secondary transmitters and secondary receivers and each user pair is assisted by only one relay which cannot be shared by more than one user. Thus, we need $N\ge M$ relays in the network.
Since all $M$ user pairs in the secondary network communicate in a given coherence time without interfering with each other, we divide each coherence time into $2M$ time-slots and implement orthogonal transmission which is easy to implement with less feedback overhead, compared to other transmission schemes such as non-orthogonal multiple access (NOMA). We make the common assumption that the interference caused to the secondary receivers by the primary  transmission is negligible~\cite{Hyadi13}. 
Similar to \cite{Nguyen20}, we also assume that the $PU$ is located closer to the second hop and as the secondary transmitters and relay nodes are geographically dislocated,  the interference caused to the $PU$ by the secondary transmitters is negligible when compared to the interference caused by the relay transmission.
Each node is equipped with a single antenna, and operates in half-duplex mode.
Without loss of generality, we consider that user~$i$ communicates via relay~$j$. For the first-hop, the channel coefficient and distance between $S_{i}$ and $R_{j}$ are denoted as $h_{ij,1}$ and $d_{ij,1}$, respectively. For the second-hop the the same parameter are denoted as $h_{ij,2}$ and $d_{ij,2}$. The distance-dependent path-loss is considered with path-loss exponent $\beta$.
For the interference channel, the channel coefficient and distance between $R_{j}$ and $PU$ are denoted as $f_{ij}$ and $d_{ij,3}$, respectively.

\vspace{-0mm}
\subsection{Analytical Model}\label{ss_anyalytical}
For the first hop, we denote $P$ as the fixed power budget of each source node.
For the second hop, in order to satisfy the interference and maximum power constraints, the  transmission power of $R_j$ is constrained as
$Q_{ij} = \textrm{min}\left[Q_{max},I_{max}/(|f_{ij}|^2/d_{3}^\beta)\right]$, where $I_{max}$ and $Q_{max}$ denote the peak interference power at $PU$ and  the peak transmission power of $R_j,~\forall j$, respectively.
Then, the end-to-end received SNR of user~$i$ helped by the DF relay $R_j$ is given by
\begin{align}\label{eq_gammaij}
    \gamma_{ij}=\min\left({\gamma}_{ij,1},{\gamma}_{ij,2} \right)
\end{align}
where
${\gamma}_{ij,1}={P}|h_{ij,1}|^2/(d_{1}^\beta  N_0)$ is the first-hop received SNR,  ${\gamma}_{ij,2}=Q_{ij}|h_{ij,2}|^2/(d_{2}^\beta  N_0)$
is the second-hop received SNR, and $N_0$ is the noise variance.
Thus, all possible user SNRs connected via any relay can be given in matrix form as
\begin{equation}\label{snrmatrix}
   {\bf\Gamma} = \left(\gamma_{ij}\right) \in \mathbb{R}^{M \times N}.
\end{equation}

Moreover, each channel amplitude within the same hop is assumed to be independent and identically distributed (i.i.d.) Nakagami-$m$ fading, while the channels from different hop can be non-identical. As such, the squared amplitude, $|h|^2\in\{|h_{ij,1}|^2,|h_{ij,2}|^2,|f_{ij}|^2\}$, follows a gamma distribution, denoted as $|h|^2\sim \mathcal{G}(m,\Omega)$, where $m\in \mathbb{Z}^+$ is   the shape parameter,  $\Omega=\mathbb{E}[|h|^2]$ and $\mathbb{E}[\cdot]$ represents the statistical expectation. Its cumulative distribution function (CDF) can then be given as
$F_{|h|^2}(x) = \gamma\left(m,mx/\Omega\right)/\Gamma(m)$
where $\Gamma(\cdot)$ is the gamma function and $\gamma(\cdot, \cdot)$ is the lower incomplete gamma function~\cite{TISP07}. Therefore, we have $|h_{ij,1}|^2\sim \mathcal{G}(m,\Omega_{h_1})$, $|h_{ij,2}|^2\sim \mathcal{G}(m,\Omega_{h_2})$ and $|f_{ij}|^2\sim \mathcal{G}(m,\Omega_{f})$.
Since we assume  that  nodes  in  each  hop  are  co-located or in a close neighbourhood, it is reasonable to take $d_{ij,1}=d_1$, $d_{ij,2}=d_2$ and $d_{ij,3}=d_3$ $\forall i,j$ where $d_1$, $d_2$ and $d_3$ can be distinct values.
\vspace{-0mm}
\subsection{Relay Selection (RS) Scheme}\label{ss_rs}
We consider the {\it max-min fairness} RS algorithm which guarantees the individual performance as well as user fairness, capable of choosing the set of paths that maximizes the minimum end-to-end SNR of all users~\cite{Atapattu13}. The algorithm can be performed on entries of the SNR matrix $\bf\Gamma$ in \eqref{snrmatrix}, and its output gives the optimal relay assignment for $M$ user pairs where we denote the RS matrix as $\hat{\bf\Gamma}$. Since a relay cannot be shared by more than one user, while each row of $\hat{\bf\Gamma}$ has only one entry, each column of $\hat{\bf\Gamma}$ has at most one entry.
Then, the effective SNR for a user is the SNR which effects the performance  of that user after applying the relay selection algorithm. We denote the effective SNR of user~$i$ as $\gamma_{(i)}$ and the $k$th largest entry in $\bf\Gamma$  as $\gamma^{(k)}$.
 Further, 
 $\gamma_{(i)}$ satisfies the property:  $\gamma_{(i)}\in\{\gamma^{(1)},\ldots,\gamma^{((M-1)N+1)}\}$
 . Thus,  $\gamma_{(i)}=\gamma^{((M-1)N+1)}$ represents the worst case which occurs when all  $\gamma^{((M-1)N+1)}\geq \cdots \geq\gamma^{(MN)}$ are in a same row  if $N>M$; or are either in same row or column if $N=M$. 
 More details of the algorithm can be found in~\cite{Atapattu13}.

\vspace{-0mm}
\section{Performance Analysis}\label{perf}
\vspace{-0mm}
\subsection{Outage Probability}\label{s_out}
The outage probability of user~$u$, $P_{o}^{(u)}$, is the probability that the SNR $\gamma_{(u)}$ falls below a certain predetermined threshold SNR $\gamma_{th}$. It can then be calculated  as $P_{o}^{(u)} = {\mathbb P}[\gamma_{(u)} \leq \gamma_{th}]$, and is given in the following lemma.
\begin{Lem}\label{Lem_user_outage}
For an underlay cognitive radio network with $M$ users and $N$ relays, the outage probability of each user with Max-Min fairness RS scheme  can be given as
\begin{equation}
\label{eq_user_outage}
\begin{split}
\hspace{-2mm}P_{o}^{(u)} =\hspace{-4mm}  \sum_{k = 1}^{(M-1)N +1}   \sum_{i=0}^{k-1}\frac{{\mathbb P}(\gamma^{(k)})(MN)!\binom{k-1}{i}[F_{\gamma_{ij}}(\gamma_{th})]^{\mu(k,i)}}{(-1)^i\mu(k,i) (k-1)!(MN-k)!}
\end{split}
\end{equation}
where $\mu(k,i)=M N - k + i + 1$,
\begin{equation*}
\label{cf_cdf_snr}
\begin{split}
     F_{\gamma_{ij}}(x) \hspace{-1mm}=& \frac{\gamma(m,\frac{mx}{\Omega_1\Lambda_1})}{\Gamma(m)} \hspace{-1mm}+\hspace{-1mm}
     \frac{\Gamma(m,\frac{mx}{\Omega_1\Lambda_1})\gamma(m,\frac{mx}{\Omega_2\Lambda_2}) \gamma(m,\frac{m\Lambda_3}{\Omega_3\Lambda_2})}{\Gamma(m)^3}\\ &-\frac{\Gamma(m,\frac{mx}{\Omega_1\Lambda_1})}{\Gamma(m)^2} \sum_{k=0}^{m-1}\hspace{-1mm} \frac{(\Omega_3x)^{k}\Gamma\hspace{-1mm}\left(\hspace{-1mm} k\hspace{-1mm}+\hspace{-1mm}m, \hspace{-1mm} \frac{m(\Omega_3x+\Omega_2\Lambda_3)}{\Omega_2\Omega_3\Lambda_2} \hspace{-1mm}\right)}{k!(\Omega_2\Lambda_3)^{-m} (\Omega_3x+\Omega_2\Lambda_3)^{k+m}}\\ &+\frac{\Gamma(m,\frac{mx}{\Omega_1\Lambda_1})}{\Gamma(m)^2} \Gamma(m,\frac{m\Lambda_3}{\Omega_3\Lambda_2})
\end{split}
\end{equation*}
with $\Omega_1 \hspace{-1mm}= \hspace{-1mm}{\Omega_{h_1}}/d_{1}^{\beta},\Omega_2 \hspace{-1mm}=\hspace{-1mm} {\Omega_{h_2}}/d_{2}^{\beta},\Omega_3\hspace{-1mm} = \hspace{-1mm}{\Omega_{f}}/d_{3}^{\beta}$, $\Lambda_1 \hspace{-1mm}= \hspace{-1mm}P/N_0, \Lambda_2\hspace{-1mm} =\hspace{-1mm} Q_{max}/N_0, \Lambda_3 \hspace{-1mm}= \hspace{-1mm}I_{max}/N_0$,  and $\Gamma(\cdot,\cdot)$ is the upper incomplete gamma function.
The term ${\mathbb P}(\gamma^{(k)})\hspace{-1mm}= \hspace{-1mm}{\mathbb P}\left[\gamma_{(u)} \hspace{-1mm}= \hspace{-1mm}\gamma^{(k)}\right]$ represents the probability that the relay with the $k$-th largest entry of  $\bf\Gamma$ corresponds to user~$u$.
\begin{proof}
	See Appendix~\ref{App_lem1}.
\end{proof}
\end{Lem}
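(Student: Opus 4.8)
The plan is to split the outage event according to which order statistic of the SNR matrix ${\bf\Gamma}$ equals the effective SNR $\gamma_{(u)}$, and then to compute separately (i) the distribution of an order statistic of $MN$ i.i.d.\ variables and (ii) the marginal CDF of a single entry $\gamma_{ij}$.

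First, using the law of total probability and the structural fact, recalled in Section~\ref{ss_rs} and proved in~\cite{Atapattu13}, that $\gamma_{(u)}\in\{\gamma^{(1)},\dots,\gamma^{((M-1)N+1)}\}$, I would write
\begin{equation*}
P_o^{(u)}=\sum_{k=1}^{(M-1)N+1}{\mathbb P}\!\left[\gamma^{(k)}\le\gamma_{th},\ \gamma_{(u)}=\gamma^{(k)}\right].
\end{equation*}
The crucial point is that the max-min fairness assignment is invariant under a common strictly increasing relabelling of the entries of ${\bf\Gamma}$, so the indicator of $\{\gamma_{(u)}=\gamma^{(k)}\}$ depends on ${\bf\Gamma}$ only through the ranking of its entries; since the entries are i.i.d.\ with a continuous law, this ranking is independent of the vector of order statistics, and the joint probability above factorizes as ${\mathbb P}(\gamma^{(k)})\,F_{\gamma^{(k)}}(\gamma_{th})$, where $F_{\gamma^{(k)}}$ is the CDF of the $k$-th largest of $MN$ i.i.d.\ copies of $\gamma_{ij}$ and ${\mathbb P}(\gamma^{(k)})={\mathbb P}[\gamma_{(u)}=\gamma^{(k)}]$ is carried along as a symbolic factor.

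Second, I would evaluate $F_{\gamma^{(k)}}$ from the density of the $(MN-k+1)$-th smallest order statistic: with $s=F_{\gamma_{ij}}(t)$,
\begin{equation*}
F_{\gamma^{(k)}}(x)=\frac{(MN)!}{(k-1)!\,(MN-k)!}\int_0^{F_{\gamma_{ij}}(x)} s^{MN-k}(1-s)^{k-1}\,ds,
\end{equation*}
and expanding $(1-s)^{k-1}$ by the binomial theorem and integrating term by term yields precisely the inner sum over $i$ in \eqref{eq_user_outage}, with $\binom{k-1}{i}$, the sign $(-1)^i$, and the exponent $\mu(k,i)=MN-k+i+1$ in the denominator. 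Inserting this into the sum over $k$ and collecting constants gives the stated expression, reducing the proof to identifying $F_{\gamma_{ij}}$.

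Third, since $\gamma_{ij}=\min(\gamma_{ij,1},\gamma_{ij,2})$ with the two hops independent, $F_{\gamma_{ij}}(x)=F_{\gamma_{ij,1}}(x)+[1-F_{\gamma_{ij,1}}(x)]\,F_{\gamma_{ij,2}}(x)$; the first-hop terms follow immediately because $\gamma_{ij,1}\sim\mathcal{G}(m,\Omega_1\Lambda_1)$. For $F_{\gamma_{ij,2}}(x)$ I would condition on the interference gain $|f_{ij}|^2/d_3^\beta\sim\mathcal{G}(m,\Omega_3)$ and separate the power-limited regime $Q_{ij}=Q_{max}$ (when this gain does not exceed $\Lambda_3/\Lambda_2$) from the interference-limited regime $Q_{ij}=I_{max}d_3^\beta/|f_{ij}|^2$. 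The power-limited part contributes the product $\gamma(m,mx/(\Omega_2\Lambda_2))\gamma(m,m\Lambda_3/(\Omega_3\Lambda_2))/\Gamma(m)^2$ by independence, while the interference-limited part is an integral of a lower incomplete gamma function against a gamma density over $|f_{ij}|^2/d_3^\beta\in(\Lambda_3/\Lambda_2,\infty)$. I expect that last integral to be the main obstacle: using $\gamma(m,y)/\Gamma(m)=1-e^{-y}\sum_{\ell=0}^{m-1}y^\ell/\ell!$, valid because $m\in\mathbb{Z}^+$, the integral breaks into a tail probability of the gamma law — giving the term $\Gamma(m,m\Lambda_3/(\Omega_3\Lambda_2))/\Gamma(m)$ — minus a finite sum of integrals of the form $\int_b^\infty g^{m+\ell-1}e^{-ag}\,dg=a^{-(m+\ell)}\Gamma(m+\ell,ab)$, and the bulk of the remaining work is the bookkeeping needed to make the arguments and coefficients of the resulting upper incomplete gamma functions coincide with the $\Omega_1,\Omega_2,\Omega_3,\Lambda_1,\Lambda_2,\Lambda_3$ combinations displayed in $F_{\gamma_{ij}}$. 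Assembling $F_{\gamma_{ij,1}}$, $1-F_{\gamma_{ij,1}}$ and $F_{\gamma_{ij,2}}$ then completes the proof, the only modelling subtlety being the i.i.d.\ continuity of the entries $\gamma_{ij}$, which rules out ties and underpins the factorization in the first step.
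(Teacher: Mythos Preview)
Your proposal is correct and follows essentially the same route as the paper: a total-probability decomposition over $k$, the standard order-statistic CDF (which the paper simply cites from~\cite{Atapattu13} rather than deriving via the beta integral as you do), and then the derivation of $F_{\gamma_{ij}}$ by conditioning on the second-hop power. The only cosmetic difference is that the paper conditions on $Q_{ij}$ itself and works with its mixed PDF (a Dirac atom at $Q_{max}$ plus a continuous density on $(0,Q_{max})$), whereas you condition directly on the interference gain $|f_{ij}|^2/d_3^\beta$ and split the integration domain into the power- and interference-limited regimes; after a change of variables the two integrals coincide, and your finite-sum expansion of $\gamma(m,\cdot)/\Gamma(m)$ plays the same role as the Gradshteyn--Ryzhik identities the paper invokes.
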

For a general network, ${\mathbb P}\left[\gamma_{(u)} = \gamma^{(k)}\right]$ can be evaluated numerically by using simulations. For a two-user network, i.e, for $M=2$ and $N\geq 2$, ${\mathbb P}\left[\gamma_{(u)} = \gamma^{(k)}\right]$ can be calculated analytically based on  \cite[Sec. IV.C]{Senanayake18}, 
and the outage probability in \eqref{eq_user_outage} can then be given as a closed-form analytical expression.


\vspace{-0mm}
\subsection{Asymptotic Analysis}
In this section we provide the high SNR performance of the outage probability. In particular, we consider $P_{o}^{(u)}$ w.r.t. the SNR $\Lambda\rightarrow \infty$, where $\Lambda\in\{\Lambda_1,\Lambda_2,\Lambda_3 \}$, and provide analytical results for different scenarios in the following lemma.
\begin{Lem}\label{Lem_user_outage_asy}
For an underlay cognitive radio network with $M$ users and $N$ relays, asymptotic behaviours of each user's outage probability w.r.t. SNR $\Lambda$  with Max-Min fairness RS scheme  are given for two cases in the following.

\noindent
{{\it {\bf Case~1} (Diversity order and array gain)}}: When $\Lambda_1=\Lambda_2=\Lambda_3=\Lambda\rightarrow \infty$, we have
\begin{equation}
\label{eq_user_outage_asy1}
\begin{split}
P_{o}^{(u)} \rightarrow  \mathcal{A}\,\Lambda^{-mN} + \mathcal{O}\left( \Lambda^{-(mN+1)}\right)
\end{split}
\end{equation}
where each user achieves diversity order $mN$ with array gain $\mathcal{A}$ given by
\begin{equation}\label{array_gain}
\mathcal{A} = \left\{\begin{array}{ll}
        \frac{G(m)^N (MN)! \gamma_{th}^{mN} \prod_{i=1}^{N-1} \frac{N-i}{MN-i}}{M (MN-N)! N!}, & M > N \\
        \frac{2G(m)^{N} (MN)! \gamma_{th}^{mN} \prod_{i=1}^{N-1} \frac{N-i}{MN-i}}{M (MN-N)! N!}, & M = N
    \end{array},
    \right.
\end{equation}
with $G(m) = \frac{m^{m-1}}{\Gamma(m)\Omega_1^m} + \frac{m^m \gamma(m,\frac{m}{\Omega_3})+\Omega_3^m\Gamma(2m,\frac{m}{\Omega_3})}{m\Gamma(m)^2 \Omega_2^m}.$

\noindent{{\it {\bf Case~2} (Outage floor)}}: When $\Lambda_2=\Lambda\rightarrow \infty$ with fixed $\Lambda_1$ and $\Lambda_3$, we have
\begin{equation}
\label{eq_user_outage_asy2}
\begin{split}
\hspace{-3mm}P_{o}^{(u)} \rightarrow\hspace{-4mm}  \sum_{k = 1}^{(M-1)N +1}   \sum_{i=0}^{k-1}\hspace{-1mm}\frac{{\mathbb P}(\gamma^{(k)})(MN)!\binom{k-1}{i}[F_{\gamma}(\gamma_{th})]^{\mu(k,i)}}{(-1)^i\mu(k,i) (k-1)!(MN-k)!}
\end{split}
\end{equation}
where
    $F_{\gamma} (x) \hspace{-1mm}=\hspace{-1mm} 1 - \frac{\Gamma(m,\frac{mx}{\Omega_1\Lambda_1})}{\Gamma(m)^2}  \sum_{k=0}^{m-1} \hspace{-1mm} \frac{(\Omega_2\Lambda_3)^m(\Omega_3x)^k}{k!(\Omega_3x+\Omega_2\Lambda_3)^{k+m}} \Gamma(k+m)$.
Since $P_{o}^{(u)}$ is not a function of $\Lambda$, we have an outage floor w.r.t. $\Lambda_2$ for fixed $\Lambda_1$ and $\Lambda_3$.

\end{Lem}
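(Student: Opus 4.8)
The plan is to start from the exact outage expression \eqref{eq_user_outage} of Lemma~\ref{Lem_user_outage} and perform a term-by-term asymptotic expansion of its finite double sum. Two facts are used throughout. First, under the co-location assumption the $MN$ entries of $\bf\Gamma$ are i.i.d., so each ${\mathbb P}(\gamma^{(k)})$ is a combinatorial constant independent of the SNR; in particular, by the worst-rank characterization in Section~\ref{ss_rs}, the event $\{\gamma_{(u)}=\gamma^{((M-1)N+1)}\}$ occurs precisely when the $N$ smallest entries of $\bf\Gamma$ all lie in the row of user~$u$ when $N>M$, giving ${\mathbb P}(\gamma^{((M-1)N+1)})=\binom{MN}{N}^{-1}=\tfrac1M\prod_{i=1}^{N-1}\tfrac{N-i}{MN-i}$, with an additional ``single-column'' realization (hence a factor $2$) when $N=M$. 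Second, since the two hops are independent, $F_{\gamma_{ij}}(x)=F_{\gamma_{ij,1}}(x)+F_{\gamma_{ij,2}}(x)-F_{\gamma_{ij,1}}(x)F_{\gamma_{ij,2}}(x)$, which is more convenient to expand at high SNR than the closed form displayed in Lemma~\ref{Lem_user_outage}.

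For Case~1, the key observation is that as $\Lambda\to\infty$ both hop SNRs grow, so $F_{\gamma_{ij}}(\gamma_{th})\to0$ and the dominant summand in \eqref{eq_user_outage} is the one with the smallest exponent $\mu(k,i)=MN-k+i+1$. Over $1\le k\le(M-1)N+1$ and $0\le i\le k-1$ this minimum equals $N$ and is attained only at $(k,i)=((M-1)N+1,0)$; every other summand has exponent $\ge N+1$. Next I would expand the two hop CDFs. From $\gamma(m,y)\sim y^m/m$ as $y\to0$, $F_{\gamma_{ij,1}}(\gamma_{th})=\gamma(m,m\gamma_{th}/(\Omega_1\Lambda))/\Gamma(m)\sim\tfrac{m^{m-1}}{\Gamma(m)\Omega_1^m}\gamma_{th}^m\Lambda^{-m}$, which is the first term of $G(m)$. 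For the second hop, writing $\eta=|h_{ij,2}|^2/d_2^\beta\sim\mathcal{G}(m,\Omega_2)$ and $\psi=|f_{ij}|^2/d_3^\beta\sim\mathcal{G}(m,\Omega_3)$, one has $\gamma_{ij,2}=\eta\min[\Lambda_2,\Lambda_3/\psi]$; conditioning on $\psi$ and specializing to $\Lambda_2=\Lambda_3=\Lambda$ gives $F_{\gamma_{ij,2}}(x)=F_\eta(x/\Lambda)F_\psi(1)+\int_1^\infty F_\eta(xt/\Lambda)f_\psi(t)\,dt$. Replacing $F_\eta$ by its leading term $\tfrac{m^{m-1}}{\Gamma(m)\Omega_2^m}(x/\Lambda)^m$ and using $F_\psi(1)=\gamma(m,m/\Omega_3)/\Gamma(m)$ together with the truncated moment $\int_1^\infty t^m f_\psi(t)\,dt=\tfrac{\Omega_3^m}{m^m\Gamma(m)}\Gamma(2m,m/\Omega_3)$ reproduces exactly the second term of $G(m)$, so $F_{\gamma_{ij}}(\gamma_{th})=G(m)\gamma_{th}^m\Lambda^{-m}+\mathcal{O}(\Lambda^{-(m+1)})$ (the cross term $F_{\gamma_{ij,1}}F_{\gamma_{ij,2}}=\mathcal{O}(\Lambda^{-2m})$ is negligible). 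Substituting into the dominant summand gives $[F_{\gamma_{ij}}(\gamma_{th})]^{N}=(G(m)\gamma_{th}^m)^N\Lambda^{-mN}+\mathcal{O}(\Lambda^{-(mN+1)})$, while the remaining summands, with exponent $\ge N+1$, are $\mathcal{O}(\Lambda^{-m(N+1)})=\mathcal{O}(\Lambda^{-(mN+1)})$. Inserting the combinatorial value of ${\mathbb P}(\gamma^{((M-1)N+1)})$ and collecting the factorial prefactor $\tfrac{(MN)!}{N\,((M-1)N)!\,(N-1)!}$ then yields \eqref{eq_user_outage_asy1} with $\mathcal{A}$ exactly as in \eqref{array_gain}, diversity order $mN$, and the factor $2$ for $N=M$ being the column contribution noted above.

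For Case~2, the argument is just a pointwise limit. With $\Lambda_1,\Lambda_3$ fixed and $\Lambda_2=\Lambda\to\infty$, every incomplete-gamma factor in the closed form of $F_{\gamma_{ij}}$ whose argument carries $\Lambda_2$ in the denominator tends to a constant by continuity: $\gamma(m,mx/(\Omega_2\Lambda_2))\to0$ and $\gamma(m,m\Lambda_3/(\Omega_3\Lambda_2))\to0$, so the second term of $F_{\gamma_{ij}}$ vanishes, $\Gamma(m,m\Lambda_3/(\Omega_3\Lambda_2))\to\Gamma(m)$, and $\Gamma(k+m,m(\Omega_3x+\Omega_2\Lambda_3)/(\Omega_2\Omega_3\Lambda_2))\to\Gamma(k+m)$. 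Using $\gamma(m,\cdot)/\Gamma(m)+\Gamma(m,\cdot)/\Gamma(m)=1$, the surviving terms collapse to $F_\gamma(x)=1-\tfrac{\Gamma(m,mx/(\Omega_1\Lambda_1))}{\Gamma(m)^2}\sum_{k=0}^{m-1}\tfrac{(\Omega_2\Lambda_3)^m(\Omega_3x)^k}{k!(\Omega_3x+\Omega_2\Lambda_3)^{k+m}}\Gamma(k+m)$. Equivalently, in the product form, $\min[\Lambda_2,\Lambda_3/\psi]\to\Lambda_3/\psi$ almost surely, so $\gamma_{ij,2}$, hence $\gamma_{ij}$ and $F_{\gamma_{ij}}(\gamma_{th})$, converges to a random variable not depending on $\Lambda$. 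Since \eqref{eq_user_outage} is a finite sum of continuous functions of $F_{\gamma_{ij}}(\gamma_{th})$, it converges to \eqref{eq_user_outage_asy2}, which contains no $\Lambda_2$; hence the outage floor.

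The main obstacle is the Case~1 expansion of the second-hop CDF: one must handle the power constraint $Q_{ij}=\min[Q_{max},I_{max}/(|f_{ij}|^2/d_3^\beta)]$ correctly (the split of the $\psi$-integral), evaluate the truncated gamma moment so that it assembles into the numerator $m^m\gamma(m,m/\Omega_3)+\Omega_3^m\Gamma(2m,m/\Omega_3)$ of $G(m)$, and confirm that no contributions of order $\Lambda^{-j}$ with $0\le j<m$ survive --- this is transparent in the product form but would require nontrivial cancellations if the displayed closed form of $F_{\gamma_{ij}}$ were expanded directly. A secondary point needing care is justifying the combinatorial values of ${\mathbb P}(\gamma^{((M-1)N+1)})$, in particular the doubling when $N=M$, and verifying that the factorial bookkeeping in \eqref{array_gain} is consistent.
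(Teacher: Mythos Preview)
Your proposal is correct and follows essentially the same route as the paper: identify the dominant summand $(k,i)=((M-1)N+1,0)$, expand $F_{\gamma_{ij}}(\gamma_{th})\sim G(m)\gamma_{th}^m\Lambda^{-m}$, insert the combinatorial value of ${\mathbb P}(\gamma^{((M-1)N+1)})$ from Section~\ref{ss_rs}, and for Case~2 pass to the pointwise limit of the incomplete gammas. The only cosmetic difference is that you reach $G(m)$ via the product decomposition $F_{\gamma_{ij}}=F_{\gamma_{ij,1}}+F_{\gamma_{ij,2}}-F_{\gamma_{ij,1}}F_{\gamma_{ij,2}}$ and a direct $\psi\lessgtr1$ split, whereas the paper approximates the integrand of $g(x)$ in \eqref{eq_Fgammaij_gx} and evaluates the resulting integral; these are the same computation under the change of variable $t=\Lambda/z$, and your probabilistic packaging makes the absence of lower-order terms in $\Lambda$ more transparent.
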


\begin{proof}
	See Appendix~\ref{App_lem2}.
\end{proof}
\noindent
{\it {\bf Remark} } Although we consider only two cases in Lemma~\ref{Lem_user_outage_asy}, similarly, we can analyze other cases.
For all other cases, i) $\Lambda_3=\Lambda\rightarrow \infty$ with fixed $\Lambda_1$ and $\Lambda_2$; ii) $\Lambda_1=\Lambda\rightarrow \infty$ with fixed $\Lambda_2$ and $\Lambda_3$; iii) $\Lambda_1=\Lambda_2=\Lambda\rightarrow \infty$ with a fixed $\Lambda_3$;
iv) $\Lambda_2=\Lambda_3=\Lambda\rightarrow \infty$ with a fixed $\Lambda_1$;
and v) $\Lambda_1=\Lambda_3=\Lambda\rightarrow \infty$ with a fixed $\Lambda_2$, each user achieves an outage floor.

\vspace{-0mm}
\subsection{Imperfect CSI over Rayleigh Fading}
For brevity, we assume Rayleigh fading channels. Further, we assume that a master node which can be any user or relay estimates CSI requirements by using minimum mean square error (MMSE) estimation. Since the focus of our discussion is on the channel estimation errors, we use the MMSE  model discussed in \cite[Sec.~II]{wang12}.
We consider that true channels $h_{ij,1}$,$h_{ij,2}$ and $f_{ij}$ are estimated as $\hat h_{ij,1}$, $\hat h_{ij,2}$ and $\hat f_{ij}$, respectively, with Gaussian estimation errors $e_{h_{ij,1}}$,$e_{h_{ij,2}}$ and $e_{f_{ij}}$, respectively. Then we have $h_{ij,1} =\hat{h}_{ij,1} + e_{h_{ij,1}}$, $h_{ij,2} = \hat{h}_{ij,2} + e_{h_{ij,2}}$, $f_{ij}= \hat{f}_{ij} +e_{f_{ij}}$. Estimated channel coefficients and corresponding estimation errors are assumed to be independent. Due to complex Gaussian true channels, Gaussian estimation errors and estimated channels are distributed as $\hat{h}_{ij,1}\sim \mathcal{CN}(0,\Omega_{\hat h_1}),\hat{h}_{ij,2}\sim \mathcal{CN}(0,\Omega_{\hat h_2}),\hat{f}_{ij}\sim \mathcal{CN}(0,\Omega_{\hat f})$, $e_{h_{ij,1}}\hspace{-1mm}\sim\mathcal{CN}(0,\Omega_{e_1}), e_{h_{ij,2}}\hspace{-1mm}\sim \mathcal{CN}(0,\Omega_{e_2}), e_{f_{ij}}\hspace{-1mm}\sim \mathcal{CN}(0,\Omega_{e_3})$,
where $\Omega_{e_1}=\Omega_{h_1}-\Omega_{\hat h_1}$, $\Omega_{e_2}=\Omega_{h_2}-\Omega_{\hat h_2}$ and $\Omega_{e_3}=\Omega_{f}-\Omega_{\hat f}$.
Following Section~\ref{ss_anyalytical}, the   transmission   power   of $R_j$ is decided based on estimated channels as   $\hat Q_{ij} = \textrm{min}\left[Q_{max},\frac{I_{max}}{|\hat f_{ij}|^2/d_{3}^\beta }\right]$. For imperfect CSI case, the end-to-end received SNR of user~$i$ helped by the DF relay $R_j$ can also be given  as \eqref{eq_gammaij}
with ${\gamma}_{ij,1}\hspace{-1mm}=\hspace{-1mm}\frac{P}{P\Omega_{e_1}+d_{1}^\beta N_0}|\hat h_{ij,1}|^2$ and
${\gamma}_{ij,2}\hspace{-1mm}=\hspace{-1mm}\frac{\hat Q_{ij}}{\hat Q_{ij}\Omega_{e_1}+d_{2}^\beta N_0}|\hat h_{ij,2}|^2$.

\begin{Lem}\label{Lem_user_outage_imCSI}
For an underlay cognitive radio network with $M$ users and $N$ relays, the outage probability of each user with Max-Min fairness RS scheme under Rayleigh fading with imperfect CSI can also be given as \eqref{eq_user_outage} with
\begin{equation}\label{eq_Fgammaij_impcsi}
    F_{\gamma_{ij}} (x) \hspace{-1mm}= \hspace{-1mm}1 - e^{\hspace{-1mm}-x\left(\frac{\Omega_{e_1}+\frac{d_1^\beta}{\Lambda_1}}{\Omega_{\hat h_1}}+\frac{\Omega_{e_2}+\frac{d_2^\beta}{\Lambda_2}}{\Omega_{\hat h_2}}
\right)}\hspace{-1mm} \left( \hspace{-1mm} 1 - \frac{e^{-\frac{d_3^\beta \Lambda_3}{\Omega_{\hat f}\Lambda_2}}}{1 \hspace{-1mm}+\hspace{-1mm} \frac{d_3^\beta\Omega_{\hat h_2}\Lambda_3}{d_2^\beta\Omega_{\hat f} x}} \hspace{-1mm} \right).
\end{equation}
\end{Lem}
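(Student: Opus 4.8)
The plan is to reuse the structure of Lemma~\ref{Lem_user_outage} verbatim, since the relay selection combinatorics are unchanged: the expression \eqref{eq_user_outage} only depends on the fading model through the common per-path CDF $F_{\gamma_{ij}}(x)$. Hence the entire content of Lemma~\ref{Lem_user_outage_imCSI} reduces to deriving \eqref{eq_Fgammaij_impcsi}, i.e.\ the CDF of $\gamma_{ij}=\min(\gamma_{ij,1},\gamma_{ij,2})$ for the Rayleigh-with-imperfect-CSI model, and then invoking the order-statistics / RS argument of Appendix~\ref{App_lem1} unchanged.

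First I would record the marginal law of each hop SNR. Under Rayleigh fading, $|\hat h_{ij,1}|^2\sim\mathcal{E}(1/\Omega_{\hat h_1})$, so $\gamma_{ij,1}=\frac{P}{P\Omega_{e_1}+d_1^\beta N_0}|\hat h_{ij,1}|^2$ is exponential with mean $\frac{\Omega_{\hat h_1}}{\Omega_{e_1}+d_1^\beta/\Lambda_1}$ after dividing numerator and denominator by $N_0$; this gives the first exponential factor in the exponent of \eqref{eq_Fgammaij_impcsi}. The second hop is the delicate one: $\gamma_{ij,2}=\frac{\hat Q_{ij}}{\hat Q_{ij}\Omega_{e_2}+d_2^\beta N_0}|\hat h_{ij,2}|^2$ where $\hat Q_{ij}=\min[Q_{max},I_{max}d_3^\beta/|\hat f_{ij}|^2]$ is itself random through $|\hat f_{ij}|^2$. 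I would condition on $|\hat f_{ij}|^2$, split into the two regimes $|\hat f_{ij}|^2\le I_{max}d_3^\beta/Q_{max}$ (where $\hat Q_{ij}=Q_{max}$, so $\gamma_{ij,2}$ is conditionally exponential) and $|\hat f_{ij}|^2> I_{max}d_3^\beta/Q_{max}$ (where $\hat Q_{ij}=I_{max}d_3^\beta/|\hat f_{ij}|^2$, and after substitution the effective mean of $\gamma_{ij,2}$ becomes a function of $|\hat f_{ij}|^2$), and then integrate each piece against the exponential density of $|\hat f_{ij}|^2$. By independence of $\gamma_{ij,1}$ and $\gamma_{ij,2}$, $1-F_{\gamma_{ij}}(x)=\bar F_{\gamma_{ij,1}}(x)\,\bar F_{\gamma_{ij,2}}(x)$; the first factor is the clean exponential tail, and the product structure in \eqref{eq_Fgammaij_impcsi} shows that the first-regime integral contributes the $e^{-d_3^\beta\Lambda_3/(\Omega_{\hat f}\Lambda_2)}$-weighted exponential in $x$ while the second-regime integral collapses, after a change of variables, to the rational term $1/(1+\frac{d_3^\beta\Omega_{\hat h_2}\Lambda_3}{d_2^\beta\Omega_{\hat f}x})$. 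The simplification that makes the second-regime integral elementary is that in that regime $\hat Q_{ij}\Omega_{e_2}\ll d_2^\beta N_0$ is \emph{not} assumed; rather, writing $t=|\hat f_{ij}|^2$ the quantity $\bar F_{\gamma_{ij,2}}(x\mid t)=\exp\!\big(-\frac{x(\hat Q_{ij}\Omega_{e_2}+d_2^\beta N_0)}{\hat Q_{ij}\Omega_{\hat h_2}}\big)$ has its $t$-independent piece factor out, leaving a single $\int \exp(-at-b/t)\,\mathrm{d}t$-type integral that, because one endpoint is $\infty$ and the integrand's $b/t$ term is handled by the path-loss normalization, evaluates in closed form to the stated rational expression (this is where I expect the bookkeeping to be heaviest and the main obstacle to lie).

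Finally I would assemble: multiply the two tails, read off $F_{\gamma_{ij}}(x)=1-\bar F_{\gamma_{ij,1}}(x)\bar F_{\gamma_{ij,2}}(x)$, match it to \eqref{eq_Fgammaij_impcsi}, and then note that substituting this $F_{\gamma_{ij}}$ into \eqref{eq_user_outage} — with the same ${\mathbb P}(\gamma^{(k)})$, the same $\mu(k,i)$, and the same order-statistics identity from Appendix~\ref{App_lem1}, none of which use the Nakagami assumption — yields the claimed outage probability. The only genuinely new step is the second-hop integral; everything else is a direct appeal to Lemma~\ref{Lem_user_outage}. I would present the derivation of \eqref{eq_Fgammaij_impcsi} in full and then write ``substituting \eqref{eq_Fgammaij_impcsi} into \eqref{eq_user_outage} and following the steps of Appendix~\ref{App_lem1} completes the proof.''
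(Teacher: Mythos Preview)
Your plan is correct and matches the paper's proof: the paper likewise reduces everything to computing $F_{\gamma_{ij}}(x)$ and then invokes Appendix~\ref{App_lem1} unchanged, conditioning on $\hat Q_{ij}=y$ (your conditioning on $|\hat f_{ij}|^2$ is the same up to a change of variables) and integrating against the mixed density $f_{\hat Q_{ij}}(y)$ of \eqref{eq_fQij} specialized to $m=1$. One small correction that will make your write-up cleaner: in the second regime the exponent of $\bar F_{\gamma_{ij,2}}(x\mid t)$ is \emph{linear} in $t=|\hat f_{ij}|^2$ (since $d_2^\beta N_0/\hat Q_{ij}\propto t$), so after multiplying by the exponential density of $t$ the integral is an elementary $\int_{\text{thr}}^\infty e^{-ct}\,dt$, not an $\int e^{-at-b/t}\,dt$ form, and the rational factor $\big(1+\tfrac{d_3^\beta\Omega_{\hat h_2}\Lambda_3}{d_2^\beta\Omega_{\hat f}x}\big)^{-1}$ drops out immediately.
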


\begin{proof}
	See Appendix~\ref{App_lem3}.
\end{proof}

\noindent
{\it {\bf Remark} } When $\Lambda_1=\Lambda_2=\Lambda_3=\Lambda\rightarrow \infty$, we have
\begin{equation}
\label{eq_user_outage_asy1_csi}
\begin{split}
\begin{split}
\hspace{0mm}P_{o}^{(u)} \hspace{-1mm}\rightarrow\hspace{-4mm}  \sum_{k = 1}^{(\hspace{-0.5mm}M\hspace{-0.5mm}-\hspace{-0.5mm}1\hspace{-0.5mm})\hspace{-0.5mm}N \hspace{-0.5mm}+\hspace{-0.5mm}1}   \sum_{i=0}^{k-1}\frac{{\mathbb P}(\hspace{-0.5mm}\gamma^{(k)}\hspace{-0.5mm})(\hspace{-0.5mm}M\hspace{-0.5mm}N\hspace{-0.5mm})!\binom{k\hspace{-0.5mm}-\hspace{-0.5mm}1}{i}\hspace{-1.5mm}\left(\hspace{-1.5mm}1 \hspace{-1mm}-\hspace{-1mm} e^{\hspace{-1mm}-\gamma_{\hspace{-0.3mm}t\hspace{-0.3mm}h}\hspace{-0.5mm}\left(\hspace{-1mm}\frac{\Omega_{e_{\hspace{-0.3mm}1}}}{\Omega_{\hat h_{\hspace{-0.3mm}1}}}+\frac{\Omega_{e_{\hspace{-0.2mm}2}}}{\Omega_{\hat h_{\hspace{-0.2mm}2}}}\hspace{-1mm}\right)}\hspace{-1.5mm}\right)\hspace{-1mm}^{\mu(k,i)}}{(-1)^i\mu(k,i) (k-1)!(MN-k)!},
\end{split}
\end{split}
\end{equation}
where each user achieves an outage floor and completely looses the diversity order under imperfect CSI.

\vspace{-0mm}
\subsection{Average throughput over Rayleigh Fading}
The throughput of a user, 
which reflects the channel capacity of the given user in half-duplex relay with orthogonal transmission, 
is given by $\tau_{(u)} = \frac{1}{2M}\log_2(1+\gamma_{(u)})$ bpcu, where $\gamma_{(u)}$ is the instantaneous SNR with Max-Min RS scheme. 
\begin{Lem}\label{Lem_user_throughput}
    For an underlay cognitive radio network with $M$ users and $N$ relays, the average throughput of each user with Max-Min fairness RS scheme under Rayleigh fading is given by
    \begin{equation}\label{eq_user_average_throughput}
        \bar \tau_{(u)} \hspace{-1mm}=\hspace{-4mm} \displaystyle\sum_{k=1}^{(M-1)N+1}\sum_{i=0}^{MN-k}\frac{(MN)!\binom{MN-k}{i}\displaystyle\sum_{j=0}^{t}\binom{t}{j}b^{t-j}c^j h(j)}{2M\ln 2(-1)^i t (k-1)!(MN-k)!}
    \end{equation}
    where $t = k+i$, $a = \frac{1}{\Omega_1\Lambda_1}+\frac{1}{\Omega_2\Lambda_2}$, $b = 1-e^{-\frac{\Lambda_3}{\Omega_3\Lambda_2}}$, $c = d (1-b)$, $d = \frac{\Omega_2\Lambda_3}{\Omega_3}$, $w = e^{dat} Ei(-dat)$, $Ei(x)$ is the exponential integral and 
    \begin{equation*}\label{eq_hj_1}
        h(j|d=1) \hspace{-1mm}= \hspace{-1mm}\left\{\hspace{-2mm}\begin{array}{lr}
            -e^{at} Ei(-at), & j=0 \\
            \frac{(-at)^j}{j!}\left(\sum_{l=1}^{j}\frac{(l-1)!}{(-at)^l} + h(0)\right), & j\geq 1 
        \end{array}\right.
    \end{equation*}
    \begin{equation*}\label{eq_hj_2}
        h(j|d\neq 1) \hspace{-1mm}= \hspace{-1mm}\left\{\hspace{-2mm}\begin{array}{lr}
            h(0|d=1), & j=0 \\
            \frac{h(0)+w}{d-1}, & j=1\\
            \frac{h(0) + w - \sum_{r=1}^{j-1} \hspace{-1mm} \frac{(-at)^r}{r!(d-1)^{-r}} \left[\sum_{l=1}^{r-1} \hspace{-1mm} \frac{(l-1)!}{(-dat)^l} - w \right]}{(d-1)^{j}}, & j\geq 2
        \end{array}\right.
    \end{equation*}
\end{Lem}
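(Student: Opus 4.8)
The plan is to write the average throughput as a complementary-CDF (CCDF) integral and then insert the order-statistic representation of $\gamma_{(u)}$ from Lemma~\ref{Lem_user_outage}, specialised to $m=1$. Starting from $\bar\tau_{(u)}=\frac{1}{2M}\mathbb{E}[\log_2(1+\gamma_{(u)})]$ and the elementary identity (integration by parts, valid for any nonnegative random variable)
\begin{equation*}
\mathbb{E}[\ln(1+X)]=\int_0^\infty\frac{1-F_X(x)}{1+x}\,dx ,
\end{equation*}
I get $\bar\tau_{(u)}=\frac{1}{2M\ln 2}\int_0^\infty\frac{\bar F_{\gamma_{(u)}}(x)}{1+x}\,dx$, where $\bar F_{\gamma_{(u)}}=1-F_{\gamma_{(u)}}$.

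Next, as in the proof of Lemma~\ref{Lem_user_outage}, $\gamma_{(u)}$ is a mixture (with weights $\mathbb{P}(\gamma^{(k)})$, $k=1,\dots,(M-1)N+1$) of the $k$-th largest order statistic of the $MN$ i.i.d.\ entries of ${\bf\Gamma}$. Using the incomplete-Beta form of the order-statistic CCDF followed by a binomial expansion,
\begin{equation*}
\bar F_{\gamma^{(k)}}(x)=\frac{(MN)!}{(k-1)!(MN-k)!}\sum_{i=0}^{MN-k}\frac{(-1)^i\binom{MN-k}{i}}{k+i}\,\bar F_{\gamma_{ij}}(x)^{\,k+i},
\end{equation*}
so, writing $t=k+i$, everything reduces to the integrals $\int_0^\infty \bar F_{\gamma_{ij}}(x)^{\,t}/(1+x)\,dx$.

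Then I would compute $\bar F_{\gamma_{ij}}$ under Rayleigh fading. Since $\gamma_{ij}=\min(\gamma_{ij,1},\gamma_{ij,2})$ with independent hops, $\bar F_{\gamma_{ij}}=\bar F_{\gamma_{ij,1}}\bar F_{\gamma_{ij,2}}$; the first hop contributes $e^{-x/(\Omega_1\Lambda_1)}$. For the second hop I would condition on the interference gain $|f_{ij}|^2/d_3^\beta$ (exponential with mean $\Omega_3$) and split $Q_{ij}=\min(Q_{max},I_{max}/(|f_{ij}|^2/d_3^\beta))$ into its two regimes; integrating the resulting exponentials over each regime gives $\bar F_{\gamma_{ij,2}}(x)=e^{-x/(\Omega_2\Lambda_2)}\bigl(b+\tfrac{c}{d+x}\bigr)$, hence $\bar F_{\gamma_{ij}}(x)=e^{-ax}\bigl(b+\tfrac{c}{d+x}\bigr)$ with $a,b,c,d$ as in the statement. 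A binomial expansion gives $\bar F_{\gamma_{ij}}(x)^{\,t}=e^{-atx}\sum_{j=0}^{t}\binom{t}{j}b^{\,t-j}c^{\,j}(d+x)^{-j}$, so the whole computation collapses to the single family $h(j)=\int_0^\infty \frac{e^{-atx}}{(1+x)(d+x)^j}\,dx$.

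Finally I would evaluate $h(j)$ in closed form: the substitution $u=1+x$ gives $h(0)=-e^{at}Ei(-at)$; for $d=1$, repeated integration by parts lowering the power of $(1+x)$ yields the recursion $h(j)=\frac1j\bigl(1-at\,h(j-1)\bigr)$ and hence the stated $h(j\mid d=1)$; for $d\neq1$, partial-fraction decomposition of $\frac{1}{(1+x)(d+x)^j}$ into $\frac{(d-1)^{-j}}{1+x}+\sum_{r=1}^{j}\frac{(\cdot)}{(d+x)^r}$ and term-by-term integration (using $\int_0^\infty \frac{e^{-atx}}{d+x}\,dx=-e^{dat}Ei(-dat)=-w$ for the first-order term and an analogous integration-by-parts recursion for the higher-order ones) produces the recursion for $h(j\mid d\neq1)$; gathering the constants completes the proof. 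I expect the second-hop CCDF (the piecewise $Q_{ij}$ forcing a case split before integrating out the interference channel) and, more laboriously, the partial-fraction bookkeeping that turns $h(j)$ into the nested-sum recursion for $d\neq1$ to be the main obstacles; one must also justify interchanging the finite sums with the integral and check convergence of $h(j)$ at the endpoints.
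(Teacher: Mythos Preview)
Your proposal is correct and follows essentially the same route as the paper: the CCDF--integral representation of $\mathbb{E}[\ln(1+X)]$, the order-statistic expansion of $\bar F_{\gamma^{(k)}}$ in powers of $\bar F_{\gamma_{ij}}$, the Rayleigh specialisation $\bar F_{\gamma_{ij}}(x)=e^{-ax}\bigl(b+\tfrac{c}{x+d}\bigr)$, the binomial expansion reducing everything to $h(j)=\int_0^\infty e^{-atx}/((1+x)(d+x)^j)\,dx$, and the partial-fraction split for $d\neq 1$. The only cosmetic differences are that the paper obtains $\bar F_{\gamma_{ij}}$ by setting $m=1$ in Lemma~\ref{Lem_user_outage} rather than reconditioning on $Q_{ij}$, and evaluates $h(j)$ by quoting \cite[Eqs.~3.352.4 and 3.353.2]{TISP07} instead of your integration-by-parts recursion---these are the same identities.
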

\begin{proof}
	See Appendix~\ref{App_lem4}.
\end{proof}

\vspace{-0mm}
\section{Numerical Examples}\label{numericals}
\begin{figure}[t]
    \centerline{\includegraphics[width=0.5\textwidth]{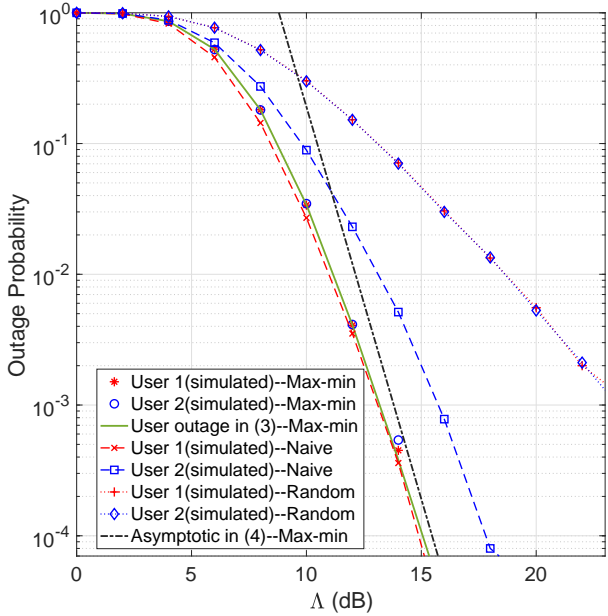}}
    \caption{Outage probability vs $\Lambda$ for a two-user three-relay network with $m=2$ and $\gamma_{th} = 5$\,dB}
    
    \vspace{-0mm}
    \label{compare}
\end{figure}

In this section, we present numerical examples to validate
our analysis and evaluate the performance of the Max-Min fairness RS scheme when compared to other common RS schemes. In these examples, we assume $\Omega_i = 1$, for $i=1,2,3$.


Fig.~\ref{compare} plots and compares the outage probability of each user in a two-user three-relay network over Nakagami-$2$ fading of the  Max-Min fairness  RS scheme with that of the naive RS (NRS) and random RS (RRS) schemes detailed in \cite{Atapattu13}. 
In order to observe the achievable diversity order we let $\Lambda_1 = \Lambda_2 = \Lambda_3 =  \Lambda$. The simulation results are generated using Monte-Carlo simulations while the exact and asymptotic  analytical results are generated using \eqref{eq_user_outage},  \eqref{eq_user_outage_asy1}, respectively. Confirming the accuracy of our analysis we observe that our analytical results match the simulation results for the entire simulated SNR range. Further, the asymptotic curves accurately quantify the outage probability performance in the high SNR regime, and we also observe a diversity order of $mN=6$. Under the Max-Min fairness  RS scheme, we observe that both users have the same outage probability, thus confirming the fairness among users. The User 1 of NRS just outperforms the Max-Min fairness RS scheme. However, the User 2 of NRS has a significantly worse performance with a diversity order of $m(N-1)=4$ which illustrates how user fairness is not maintained in NRS. Since RRS is equivalent to single relay case, it has the worst performance with a diversity order of $m=2$. 


Fig.~\ref{f_3N_g1} plots the outage probability of each user in a three-user three-relay network
under Nakagami-$m$ fading where $m = 1, 3$. We fix  $\Lambda_1$ and  $\Lambda_3$ and change $\Lambda_2$ to observe an outage floor. The exact and asymptotic  analytical results are generated using \eqref{eq_user_outage},  \eqref{eq_user_outage_asy2}, respectively. As expected we observe an outage floor which is accurately quantified by \eqref{eq_user_outage_asy2} in the high SNR regime.
\begin{figure}[t]
    \centerline{\includegraphics[width=0.5\textwidth]{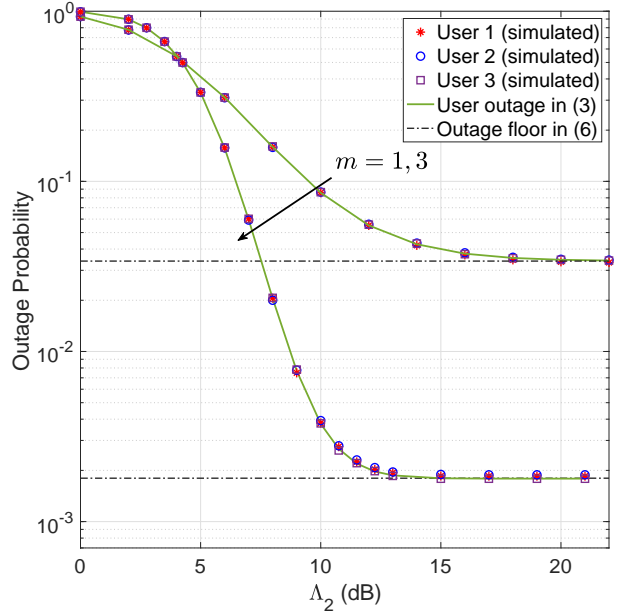}}
    \caption{Outage probability vs $\Lambda_2$ for a three-user three-relay network when $\Lambda_1=25~dB$, $\Lambda_3=10~dB$ and $\gamma_{th} = 5~dB$}
    \vspace{-0mm}
    \label{f_3N_g1}
\end{figure}

Fig.~\ref{imCSI} plots the user outage probability with imperfect CSI and Rayleigh fading, where $\Omega_{e_1}/\Omega_{h_1}\hspace{-1mm} =\hspace{-1mm}\Omega_{e_2}/\Omega_{h_2}\hspace{-1mm} =\hspace{-1mm}\Omega_{e_3}/\Omega_{f}\hspace{-1mm} = \hspace{-1mm}0.05$. We also set $\Lambda\hspace{-1mm}=\hspace{-1mm}\Lambda_1\hspace{-1mm}=\hspace{-1mm}\Lambda_2\hspace{-1mm}=\hspace{-1mm}\Lambda_3$ and consider two network conditions, namely a three-user four-relay network and a four-user four-relay network. The analytical results of imperfect CSI are generated from the expressions in Lemma~\ref{Lem_user_outage_imCSI}, which can be verified by the simulated results. We can observe that user outage will achieve full diversity order with perfect CSI, while in the imperfect CSI case,  it will reach a floor in the high SNR regime. This is reasonable in the sense that channel estimation error becomes the dominant error in the high SNR regime, and it will not change with SNR.
\begin{figure}[t]
    \centerline{\includegraphics[width=0.5\textwidth]{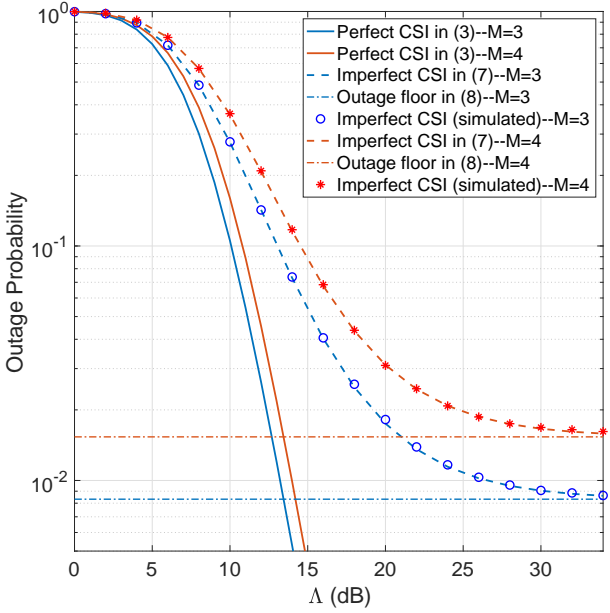}}
    \caption{Outage probability vs $\Lambda$ for a four-relay network for imperfect CSI with $M=3,4$ and $\gamma_{th} = 5$ dB}
    \vspace{-0mm}
    \label{imCSI}
\end{figure}


Fig.~\ref{tp_L2_3x4} plots the average throughput for a three-user four-relay network over Rayleigh fading channels. The simulation result for max-min RS scheme verified our analytical result in \eqref{eq_user_average_throughput}. The simulated average throughput of each user for two other RS schemes are also plotted. Naive RS scheme is not a fair scheme. It only guarantees the optimality of user 1 while sacrifices the performance of other users. Random RS scheme remains the worst performance.


\begin{figure}[t]
    \centerline{\includegraphics[width=0.5\textwidth]{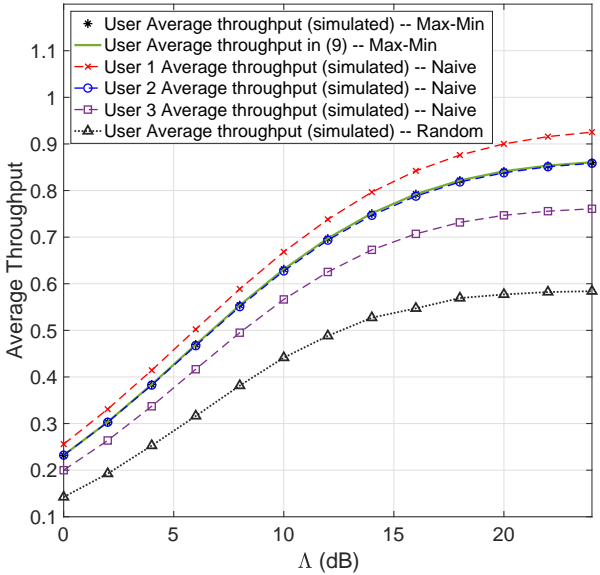}}
    \caption{Average throughput vs $\Lambda$ for a three user four-relay network with $\Lambda_2 = \Lambda, \Lambda_1=25~dB$, $\Lambda_3=10~dB$}
    \vspace{-0mm}
    \label{tp_L2_3x4}
\end{figure}

\vspace{-0mm}
\section{Conclusion}\label{conclusion}
The outage probability and average throughput  performance of an underlay cognitive radio network with multiple secondary users is analysed when the communication is assisted by dual-hop DF relaying. We investigate the performance of the Max-Min fairness RS scheme when the channels are subject to Nakagami-$m$ and/or Rayleigh fading channels. We conduct an asymptotic analysis which revealed that the Max-Min fairness RS scheme achieves the full diversity order of $mN$ when the interference level at the primary user increases proportionally with the relay power, and it has an outage floor when the primary user has a strictly fixed interference level.
Further, our RS scheme and analytical framework can easily be extended to a general multi-hop relay network where RS can be implemented hop-by-hop in a distributed manner\cite{Senanayake18,Sun20}. For a general $L$-hop network, we first develop individual SNR matrices $\bf \Gamma$s  for first $(L-2)$ hops and the joint SNR matrix $\bf \Gamma$ for the last two hops.  Then we apply the max-min fair RS algorithm hop-by-hop as explained in [11].

\begin{appendices}
\vspace{-0mm}
\section{Proof of Lemma~\ref{Lem_user_outage}}\label{App_lem1}
Based on the Max-Min RS scheme, $\gamma_{(u)}$ can take values $\gamma^{(1)}, \cdots, \gamma^{((M-1)N +1)}$.
Thus, we can write the outage probability of user~$u$ as 
\begin{equation}
\label{eq_outuseru}
\begin{split}
P_{o}^{(u)} 
= \sum_{k = 1}^{(M - 1)N +1} {\mathbb P}[\gamma_{(u)} = \gamma^{(k)}]F_{\gamma^{(k)}}(\gamma_{th}),
\end{split}
\end{equation}
where ${\mathbb P}[\gamma_{(u)} = \gamma^{(k)}]$ denotes the  probability $\gamma_{(u)} = \gamma^{(k)}$ and $F_{\gamma^{(k)}}(\gamma_{th})$ is the CDF of the $k$th largest element among $MN$ independent entries in ${\bf {\Gamma}}$ which can be written as \cite[eq. 8]{Atapattu13}
\begin{equation}\label{eq_orderstat}
    \hspace{-0mm}F_{\gamma^{(k)}}(x) =\hspace{-2mm} \sum_{i=0}^{k-1}\frac{(MN)!\binom{k-1}{i}(-1)^i[F_{\gamma_{ij}}(x)]^{MN-k+i+1}}{(MN-k+i+1)(k-1)!(MN-k)!}
\end{equation}
where $F_{\gamma_{ij}}(x)$ is the CDF of $\gamma_{ij}=\min\left({\gamma}_{ij,1},{\gamma}_{ij,2} \right)$. Since  ${\gamma}_{ij,1}$  and ${\gamma}_{ij,2}$ are mutually independent and Nakagami-$m$ distributed, the   $F_{\gamma_{ij}}(x)$ for a given $Q_{ji} = y$ can be derived as
\begin{align}
    \hspace{-2mm}F_{\gamma_{ij}|Q_{ji}} (x|y) \hspace{-1mm}
    &= \hspace{-1mm}\frac{\gamma(m,\frac{mx}{\Omega_1\Lambda_1})}{\Gamma(m)} + \frac{\Gamma(m,\frac{mx}{\Omega_1\Lambda_1})\gamma(m,\frac{mN_0x}{\Omega_2y})}{\Gamma(m)^2} \label{eq_Fgammaijy}
\end{align}
where $F_{{\gamma}_{ij,1}}(x)\hspace{-1mm} = \hspace{-1mm} \gamma(m,\frac{mx}{\Omega_1\Lambda_1})/\Gamma(m)$ with $\Omega_1 = {\Omega_{h_1}}/d_{1}^{\beta},\Lambda_1 = P/N_0$, and $F_{{\gamma}_{ij,2}|Q_{ij}}(x|y) = \gamma(m,\frac{mN_0x}{\Omega_2y})/\Gamma(m)$ with $\Omega_2 = {\Omega_{h_2}}/d_{2}^{\beta}$.
Next we note that the CDF of $Q_{ji}$ is given by
\begin{align}\label{eq_FQij}
    F_{Q_{ji}}(y) \hspace{-1mm}=\hspace{-1mm} \Big\{\begin{array}{ll}
        \hspace{-1mm}1-F_{|f_{ij}|^2} (d_3^\beta I_{max}/y), & \hspace{-1mm}0\leq y<Q_{max} \\
        \hspace{-1mm}1, & \hspace{-1mm}y\geq Q_{max}
    \end{array}.
\end{align}
We then evaluate its probability density function (PDF) as 
\begin{equation}\label{eq_fQij}
\begin{split}
    f_{Q_{ji}}(y) \hspace{-1mm}=\hspace{-1mm} \left(\hspace{-1mm}\frac{mI_{max}}{\Omega_3}\hspace{-1mm}\right)^{\hspace{-1mm}m} \hspace{-2mm}\frac{e^{-\frac{mI_{max}}{\Omega_3y}}U(y)}{\Gamma(m)y^{m+1}}
    \hspace{-1mm}+\hspace{-1mm}\frac{\gamma(m,\frac{m\Lambda_3}{\Omega_3\Lambda_2})D(y)}{\Gamma(m)},
\end{split}
\end{equation}
where $U(y)\hspace{-1mm}=\hspace{-1mm}\left(u(y)\hspace{-1mm}-\hspace{-1mm}u(y\hspace{-1mm}-\hspace{-1mm}Q_{max})\right)$, $D(y)\hspace{-1mm}=\hspace{-1mm}\delta\left(y\hspace{-1mm}-\hspace{-1mm}Q_{max}\right)$ with $u(\cdot)$ and $\delta(\cdot)$ denoting step and  impulse functions. The CDF of $\gamma_{ij}$ can then be derived with the aid of \eqref{eq_Fgammaijy} and \eqref{eq_fQij} as
\begin{equation}
\label{eq_Fgammaij_gx}
\begin{split}
     & F_{\gamma_{ij}}(x) =  \int F_{\gamma_{ij}|Q_{ji}} (x|y) f_{Q_{ji}}(y) dy\\
    & =  \frac{\gamma(m,\frac{mx}{\Omega_1\Lambda_1})}{\Gamma(m)} +\frac{\Gamma(m,\frac{mx}{\Omega_1\Lambda_1})\gamma(m,\frac{mx}{\Omega_2\Lambda_2}) \gamma(m, \frac{m\Lambda_3}{\Omega_3\Lambda_2})}{\Gamma(m)^3} \\
     &~~+ \frac{\Gamma(m,\frac{mx}{\Omega_1\Lambda_1})}{\Gamma(m)^2} \underbrace{\left(\frac{m\Lambda_3}{\Omega_3}\right)^{\hspace{-1mm} m}\hspace{-2mm}\int_{0}^{\Lambda_2}\frac{\gamma(m,\frac{mx}{\Omega_2z}) e^{-\frac{m\Lambda_3}{\Omega_3z}}}{\Gamma(m)z^{m+1}}dz}_{:= g(x)}
     ,
\end{split}
\end{equation}
where $\Omega_3\hspace{-1mm} = \hspace{-1mm}\Omega_f/d_3^\beta, \Lambda_2 \hspace{-1mm}= \hspace{-1mm}Q_{max}/N_0, \Lambda_3 \hspace{-1mm}= \hspace{-1mm}I_{max}/N_0$.
Further, the second equality comes by substituting \eqref{eq_Fgammaijy} and \eqref{eq_fQij}, and then by using term-by-term integration under variable transformation $z = y/N_0$. With aid of \cite[Eqs.~8.352.2 and 8.356.3]{TISP07}, we can solve $g(x)$ as $g(x) = \Gamma(m,\frac{m\Lambda_3}{\Omega_3\Lambda_2}) -  \sum_{k=0}^{m-1} \frac{(\Omega_2\Lambda_3)^m(\Omega_3x)^k \Gamma\left( k+m, \hspace{-1mm} \frac{m(\Omega_3x+\Omega_2\Lambda_3)}{\Omega_2\Omega_3\Lambda_2} \right) }{k!(\Omega_3x+\Omega_2\Lambda_3)^{k+m}}$. After some mathematical manipulations, we can complete the proof of Lemma~\ref{Lem_user_outage}.


\vspace{-0mm}
\section{Proof of Lemma~\ref{Lem_user_outage_asy}}\label{App_lem2}
1) {{\it {\bf Case~1}}}:
For brevity, we set $\Lambda_1 = \Lambda_2 = \Lambda_3=\Lambda$. Thus, $F_{\gamma_{ij}} (\gamma_{th})$ in (\ref{eq_Fgammaij_gx}) can be rewritten as
\begin{equation}
\label{eq_Fgammaij_gx_diversity}
\begin{split}
     F_{\gamma_{ij}}(x) = & \frac{\gamma(m,\frac{mx}{\Omega_1\Lambda})}{\Gamma(m)} +  \frac{\Gamma(m,\frac{mx}{\Omega_1\Lambda})}{\Gamma(m)^2}g(x)\\
     &+\frac{\gamma(m,\frac{m}{\Omega_3})\Gamma(m,\frac{mx}{\Omega_1\Lambda})\gamma(m,\frac{mx}{\Omega_2\Lambda})}{\Gamma(m)^3},
\end{split}
\end{equation}
where
    $g(x) = \left(\frac{m\Lambda}{\Omega_3}\right)^m\int_{0}^{\Lambda} \frac{\gamma(m,\frac{mx}{\Omega_2z})e^{-\frac{m\Lambda}{\Omega_3z}}}{\Gamma(m)z^{m+1}}dz$.
Re-writing the lower incomplete gamma function for $x\rightarrow 0$ as
    $\gamma(m,x) \approx   \frac{x^m}{m}$
and using $\int_0^x \frac{e^{-\frac{k}{t}}}{t^{m+1}}dt = k^{-m} \Gamma(m,\frac{k}{x})$, $g(x)$ can be approximated  as
    $g(x)
    \approx \frac{\Omega_3^m \Gamma(2m,\frac{m}{\Omega_3})}{m \Gamma(m) \Omega_2^m} \left(\frac{x}{\Lambda}\right)^m$.
Applying this approximation to  (\ref{eq_Fgammaij_gx_diversity}) leads to
    $F_{\gamma_{ij}} (x) 
    \approx G(m) \left(\frac{x}{\Lambda}\right)^m + \mathcal{O}\left(\frac{1}{\Lambda^{m+1}}\right)$
where $G(m) = \frac{m^{m-1}}{\Gamma(m)\Omega_1^m} + \frac{m^m \gamma(m,\frac{m}{\Omega_3})+\Omega_3^m\Gamma(2m,\frac{m}{\Omega_3})}{m\Gamma(m)^2 \Omega_2^m}$. Thus, for large $\Lambda\rightarrow\infty$, an asymptotic expression for user outage probability in \eqref{eq_user_outage} is
\begin{equation*}
\label{eq_user_outage_asya_nakagami}
P_{o}^{(u)}  \rightarrow  \hspace{-3mm} \sum_{k = 1}^{(M-1)N +1} \sum_{i=0}^{k-1}  \frac{{\mathbb P}(\gamma^{(k)})(MN)!\binom{k-1}{i}\hspace{-1mm}\left(\hspace{-1mm}\frac{G(m)\gamma_{th}^m}{\Lambda^m} \hspace{-1mm}\right)^{\mu(k,i)}}{(-1)^i\mu(k,i)(k-1)!(MN-k)!},
\end{equation*}
where $\mu(k,i)\hspace{-1mm}=\hspace{-1mm}M\hspace{-1mm}N\hspace{-0.5mm}-\hspace{-0.5mm}k\hspace{-0.5mm}+\hspace{-0.5mm}i\hspace{-0.5mm}+\hspace{-0.5mm}1$.
We can extract the dominant term of $P_{o}^{(u)}$ when $k\hspace{-1mm}=\hspace{-1mm}(M\hspace{-1mm}-\hspace{-1mm}1)N \hspace{-1mm}+\hspace{-1mm}1$ and $i\hspace{-1mm}=\hspace{-1mm}0$, which results in
\begin{equation}
\label{eq_user_outage_asyb_nakagami}
\begin{split}
P_{o}^{(u)} \rightarrow & \frac{{\mathbb P}(\gamma^{((M-1)N +1)})(MN)!}{N ((M-1)N)!(N-1)!} \left(\frac{G(m)\gamma_{th}^m}{\Lambda^m} \right)^{N}.
\end{split}
\end{equation}
 For the worst case scenario, i.e., $\gamma_{(u)}=\gamma^{((M-1)N+1)}$, the corresponding probability can be calculated as
\begin{equation}\label{eq_worst_prob}
    {\mathbb P}\left(\gamma^{((M-1)N+1)}\right) = \left\{\begin{array}{ll}
        \frac{1}{M}\prod_{i=1}^{N-1} \frac{N-i}{MN-i}, & M > N \\
        \frac{2}{M}\prod_{i=1}^{N-1} \frac{N-i}{MN-i}, & M = N.
    \end{array}
    \right.
\end{equation}
Substituting \eqref{eq_worst_prob} into \eqref{eq_user_outage_asyb_nakagami}, we complete the proof of Case~1.

2) {{\it {\bf Case~2} (Outage floor)}}: When $\Lambda_2 = \Lambda
\rightarrow \infty$ with fixed $\Lambda_1$ and $\Lambda_3$, we have the asymptotic expression in \eqref{eq_user_outage_asy2} by noting that $\Gamma(m,\frac{m\Lambda_3}{\Omega_3\Lambda_2})\rightarrow\Gamma(m)$, $\gamma(m,\frac{m\Lambda_3}{\Omega_3\Lambda_2})\rightarrow 0$ and $\Gamma\left(k+m, \frac{m(\Omega_3x+\Omega_2\Lambda_3)}{\Omega_2\Omega_3\Lambda_2} \right)\rightarrow\Gamma(k+m)$ in \eqref{eq_user_outage}.

\vspace{-0mm}
\section{Proof of Lemma~\ref{Lem_user_outage_imCSI}}\label{App_lem3}
Following Appendix~\ref{App_lem1}, from \eqref{eq_Fgammaijy} when $m=1$, the CDF of $\gamma_{ij}$ for a given $\hat Q_{ij} = y$ is given by
\begin{equation}
    F_{\gamma_{ij}|\hat Q_{ij}}(x|y) = 1 - e^{-x\left(\frac{\Omega_{e_1}+d_1^\beta/\Lambda_1}{\Omega_{\hat h_1}}+\frac{\Omega_{e_2}}{\Omega_{\hat h_2}} \right) - \frac{d_2^\beta N_0 x}{\Omega_{\hat h_2}y}},
\end{equation}
where $F_{{\gamma}_{ij,1}}(x)\hspace{-1mm} = \hspace{-1mm} 1-e^{-\frac{x}{\Omega_{\hat h_1}}\left(\Omega_{e_1}+\frac{d_1^\beta}{\Lambda_1} \right)}$, and $F_{{\hat \gamma}_{ij,2}|\hat Q_{ij}}(x|y) = 1-e^{-\frac{x}{\Omega_{\hat h_2}}\left(\Omega_{e_2}+\frac{d_2^\beta N_0}{y} \right)}$.
By noting that
\begin{equation}
    f_{\hat Q_{ij}} (y) \hspace{-1mm} = \hspace{-1mm} \frac{d_3^\beta I_{max}}{\Omega_{\hat f} y^2} e^{\hspace{-1mm} -\frac{d_3^\beta I_{max}}{\Omega_{\hat f} y}}U(y) + \hspace{-1mm} \left( \hspace{-1mm} 1\hspace{-1mm} -\hspace{-1mm} e^{\hspace{-1mm} -\frac{d_3^\beta \Lambda_3}{\Omega_{\hat f}\Lambda_2}}\hspace{-1mm} \right) \hspace{-1mm} D(y),
\end{equation}
the CDF is calculated as $F_{\gamma_{ij}}(x) = \int F_{\gamma_{ij}|\hat Q_{ji}} (x|y) f_{\hat Q_{ji}}(y) dy$. With straightforward mathematical manipulations, we can solve this as in \eqref{eq_Fgammaij_impcsi},
which completes the proof. 


\vspace{-0mm}
\section{Proof of Lemma~\ref{Lem_user_throughput}}\label{App_lem4}
By order statistics, $F_{\gamma_{(u)}}$ can be rewritten as
\begin{equation}\label{eq_user_SNR_cCDF}
    F_{\gamma_{(u)}}\hspace{-0.5mm}(x) \hspace{-1mm}= \hspace{-1mm}1 - \hspace{-4mm} \sum_{k=1}^{(\hspace{-0.5mm}M\hspace{-0.5mm}-\hspace{-0.5mm}1\hspace{-0.5mm})\hspace{-0.5mm}N\hspace{-0.5mm}+\hspace{-0.5mm}1}\sum_{i=0}^{M\hspace{-0.5mm}N\hspace{-0.5mm}-\hspace{-0.5mm}k}\hspace{-1mm}\frac{(M\hspace{-0.5mm}N)!\binom{M\hspace{-0.5mm}N\hspace{-0.5mm}-\hspace{-0.5mm}k}{i}{\mathbb P}(\hspace{-0.5mm}\gamma^{(k)}\hspace{-0.5mm})(1\hspace{-1mm}-\hspace{-1mm}F_{\gamma_{ij}}\hspace{-0.5mm}(\hspace{-0.5mm}x\hspace{-0.5mm})\hspace{-0.5mm})^{k+i}}{(-1)^i (k+i) (k-1)!(MN-k)!},
\end{equation}
where 
\begin{equation}\label{eq_CDF_gammaij_Rayleigh}
    F_{\gamma_{ij}}(x) = 1 - e^{-ax}(b+\frac{c}{x+d})
\end{equation}
by set $m=1$ in \eqref{eq_user_outage} with $a = \frac{1}{\Omega_1\Lambda_1}+\frac{1}{\Omega_2\Lambda_2}$, $b = 1-e^{-\frac{\Lambda_3}{\Omega_3\Lambda_2}}$, $c = d (1-b)$ and $d = \frac{\Omega_2\Lambda_3}{\Omega_3}$. The average throughput of each user can be given by \cite[eq.17]{Atapattu19}
\begin{equation}\label{eq_throughput}
    \bar \tau_{(u)} 
    = \frac{1}{2M\ln 2}\int_0^\infty \frac{1-F_{\gamma_{(u)}}(x)}{1+x} d x
\end{equation} 
From \eqref{eq_user_SNR_cCDF} and \eqref{eq_throughput}, the average throughput of each user is given by
\begin{align}\label{eq_user_throughput_qx}
    \bar \tau_{(u)} \hspace{-1mm}=\hspace{-1mm} \frac{1}{2M\ln 2}\hspace{-3mm}\sum_{k=1}^{(M-1)N+1}\sum_{i=0}^{MN-k}\frac{(-1)^i(MN)!\binom{MN-k}{i}}{ t (k-1)!(MN-k)!}\nonumber\\
    \underbrace{\int_0^\infty \frac{(e^{-ax}(b+\frac{c}{x+d})^t)}{x+1}dx}_{:=q(x)},
\end{align}
where $t = k+i$. Applying binomial expansion to $(b+\frac{c}{x+d})^t$, $q(x)$ can be given by
\begin{equation}\label{eq_qx}
    q(x) = \sum_{j=0}^{t}\binom{t}{j}b^{t-j}c^j \underbrace{\int_0^\infty \frac{e^{-atx}}{(x+1)(x+d)^j}dx.}_{:=h(j)}
\end{equation}
If $d = 1$, $h(j)$ is given by
\begin{equation}\label{eq_hj_d=1}
    h(j|d=1) = \int_0^\infty\frac{e^{-atx}}{(x+1)^{j+1}}dx.
\end{equation}
If $d\neq 1$, then by partial fraction, 
\begin{equation}\label{hj_partial_fraction}
    \frac{1}{(x+1)(x+d)^j} = \frac{(d-1)^{-j}}{x+1} - \sum_{r=1}^{j}\frac{(d-1)^{r-j-1}}{(x+d)^r}.
\end{equation}
Therefore, $h(j)$ is given by
\begin{equation}\label{eq_hj_pf}
    h(j|d\neq 1) = \frac{\int_0^\infty\frac{e^{-atx}}{x+1}dx}{(d-1)^{j}} - \sum_{r=1}^{j}\frac{\int_0^\infty\frac{e^{-atx}}{(x+d)^r}dx}{(d-1)^{j-r+1}}.
\end{equation}
Applying \cite[Eqs.~3.352.4 and 3.353.2]{TISP07} to \eqref{eq_hj_d=1} and \eqref{eq_hj_pf}, we can derive $h(j)$ in \eqref{eq_hj_1} and \eqref{eq_hj_2} and then complete the proof.

\end{appendices}
\vspace{-0mm}
\bibliographystyle{IEEEtran}
\bibliography{0reference,IEEEabrv}

\end{document}